\newtheorem{thm}{Theorem}[section]
\newtheorem{lem}[thm]{Lemma}
\newtheorem{prop}[thm]{Proposition}
\newtheorem{remark}[thm]{Remark}
\numberwithin{equation}{section}
\newenvironment{enum(a)}{\begin{enumerate}}{\end{enumerate}}
\newenvironment{enum(i)}{\begin{enumerate}}{\end{enumerate}}
\newcommand{\CCC}{{\mathbb{C}}}
\newcommand{\NNN}{{\mathbb{N}}}
\newcommand{\RRR}{{\mathbb{R}}}
\newcommand{\ZZZ}{{\mathbb{Z}}}
\newcommand{\cB}{{\mathcal{B}}}
\newcommand{\cF}{{\mathcal{F}}}
\newcommand{\cO}{{\mathcal{O}}}
\newcommand{\cS}{{\mathcal{S}}}
\newcommand{\cT}{{\mathcal{T}}}
\begin{document}

\author{Alexisz Tamás Gaál}
\title{Long range order in a hard disk model\\ in statistical mechanics}
\date{October 2, 2013}

\maketitle

\small
\begin{abstract}
\noindent We model two-dimensional crystals by a configuration space in which every admissible configuration is a hard disk configuration and a perturbed version of some triangular lattice with side length one. In this model we show that, under the uniform distribution, expected configurations in a given box are arbitrarily close to some triangular lattice whenever the particle density is chosen sufficiently high. This choice can be made independent of the box size.
\end{abstract}

{\bf Keywords:} Spontaneous symmetry breaking, hard-core potential, rigidity estimate.

\normalsize

\section{Introduction}

The breaking of rotational symmetry in two-dimensional models of crystals at low temperature has been indicated since long, see \cite{Mer68} and \cite{NH79}. F. Merkl and S. W.W. Rolles showed the breaking of rotation symmetry in \cite{MR} in a simple model without defects. In this model of crystals, atoms can be enumerated by a triangular lattice. In the very recent work \cite{HMR} by M. Heydenreich, F. Merkl and S. W.W. Rolles, defects were integrated into the model; defects are single, isolated, missing atoms. However, the results in \cite{HMR} can be generalized to larger bounded islands of missing atoms as also mentioned in \cite{HMR}, but non-local defects are not included. The first model in \cite{MR} treated pair potentials with at least quadratic growth; the second one, \cite{HMR}, tackled the case of strictly convex potentials.

We are going to examine an analogue of the models in \cite{MR} and \cite{HMR} with a hard-core repulsion. For this potential we show the breaking of the rotational symmetry in a strong sense. Our model does not include defects, but the result extends to models with isolated defects as in \cite{HMR}. Uniformity in the box size ensures the existence of infinite volume measures with the analogous property. This work is motivated by the following open problem: is there a Gibbs measure on the set of locally finite point configurations in $\RRR^2$ which breaks the rotational symmetry of the hard-core potential? This question is analogous to the problem which was solved in \cite{Geo99} and \cite{Ri09} for translational symmetry. However, the outcome is different than what is expected in the case of rotational symmetry, as translational symmetry is preserved, see \cite{Geo99} and \cite{Ri09}.

\section{Configuration space}

The \emph{standard triangular lattice} in $\RRR^2$ is the set $I=\ZZZ+\tau \ZZZ$ with $\tau=e^{\frac{i\pi}{3}}$. We identify $\ZZZ\subset\RRR\subset\RRR^2$ by $\RRR\ni x\ \hat=\ (x,0)\in\RRR^2$ and $\RRR^2\subset\CCC$ by $(x,y)\ \hat=\ x+iy$. The set $I$ is an index set, which is going to be used to parametrize countable point configurations in the real plane. Let us define the quotient space $I_N=I/(NI)$ for an $N\in\NNN:=\{1,2,3, ...\}$. We identify $I_N$ with the following specific set of representatives: 

\begin{equation}
I_N=\{x+y\tau\ |\ x,y\in\{0, ..., N-1\} \}.
\label{I_N}
\end{equation}

A \emph{parametrized point configuration} in $\RRR^2$ is a function $\omega: I\rightarrow \RRR^2$, $x\mapsto \omega(x)$, which determines the point configuration $\{ \omega(x)\ |\ x\in I \}\subset \RRR^2$.  For the set of all parametrized point configurations we introduce the character $\Omega=\{\omega: I\rightarrow \RRR^2\}$. Note that a single point configuration $\{ \omega(x)\ |\ x\in I \}\subset \RRR^2$ can be parametrized by many different $\omega\in\Omega$.

Let $\epsilon\in(0,1]$. An \emph{$N$-periodic parametrized point configuration} with side length $l\in(1, 1+\epsilon)$ is a parametrized configuration $\omega$ which satisfies the \emph{periodic boundary conditions}:

\begin{equation}
\omega(x+Ny)=\omega(x)+lNy \quad \textrm{for all}\ x, y \in I.
\label{periodic}
\end{equation}

The set of $N$-periodic parametrized configurations with side length $l$ is denoted by $\Omega^{per}_{N, l}\subset \Omega$. From now on we will omit the word parametrized because we are going to work solely with \emph{point configurations} which are parametrized by $I$. An $N$-periodic configuration is uniquely determined by its values on $I_N$. Therefore, we identify $N$-periodic configurations $\omega\in\Omega^{per}_{N,l}$ with functions $\omega: I_N\rightarrow \RRR^2$.

The bond set $E\subset I\times I$ contains index-pairs with Euclidean distance one; this is $E=\{(x,y)\in I\times I\ |\ |x-y|=1\}$. In order to transfer the definition to the quotient space $I_N$, we define an equivalence relation $\sim_N$ on $E$ by $(x, y)\ \sim_N \ (x',  y' )$ if and only if there is a $z\in N I$ such that $x=x'+z$ and $y=y'+z$. We set $E_N=E/\sim_N$. We can think of $E_N$ as a bond set $E_N\subset I_N\times I_N$.

For $x\in I$ and $z\in \{1, \tau\}$, define the open \emph{triangle}

\[
\triangle_{x,z}=\{x+sz+t\tau z\ |\ 0<s,t,\ s+t<1\}
\]

with corner points $x, \ x+z$ and $x+\tau z$. For $\triangle_{x,z}$ denote the set of corner points by $\cS(\triangle_{x,z})=\{x, x+z, x+\tau z\}$. On the set of all triangles

\[
\cT=\{\triangle_{x,z} \ |\ x\in I\ \textrm{and}\ z\in\{1, \tau\}\},
\]

we define an equivalence relation: $\triangle_{x,z}\sim_N \triangle_{x',z'}$ if and only if $x-x'\in NI$ and $z=z'$. The set of equivalence classes is denoted by $\cT_N=\cT/\sim_N$. We identify equivalence classes $\triangle\in\cT_N$ with their unique representative with corners in the set $\{x+\tau y\ |\ x,y\in\{0, ..., N\}\}$. The closures of the triangles in $\cT_N$ cover the convex hull of the above set, which is denoted by $U_N=\textrm{conv}(\{x+\tau y\ |\ x,y\in\{0, ..., N\}\})$.

\section{Probability space}

By definition $\Omega=(\RRR^2)^I$, and we can identify $\Omega^{per}_{N, l}=(\RRR^2)^{I_N}$. Both sets are endowed with the corresponding product $\sigma$-fields $\cF=\bigotimes_{x\in I}\cB(\RRR^2)$ and $\cF_N=\bigotimes_{x\in I_N}\cB(\RRR^2)$ where $\cB(\RRR^2)$ denotes the Borel $\sigma$-field on each factor.
The event of admissible, N-periodic configurations $\Omega_{N, l}\subset \Omega^{per}_{N, l}$ is defined by the properties $(\Omega 1)-(\Omega 3)$:

$(\Omega 1) \quad |\omega(x)-\omega(y)|\in (1, 1+\epsilon)$ for all $(x,y)\in E$.

For $\omega\in\Omega$ we define the extension $\hat\omega: \RRR^2\to\RRR^2$ such that $\hat\omega(x)=\omega(x)$ if $x\in I$, and on the closure of any triangle $\triangle\in\cT$, the map $\hat\omega$ is defined to be the unique affine linear extension of the mapping defined on the corners of $\triangle$.

$(\Omega 2)\quad$ The map $\hat\omega: \RRR^2\to\RRR^2$ is injective.

$(\Omega 3) \quad$ The map $\hat\omega$ is orientation preserving, this is to say that $\det(\nabla\hat\omega(x))>0$ for all $\triangle\in\cT$ and $x\in\triangle$ with the Jacobian $\nabla\hat\omega: \cup\cT\to\RRR^{2\times2}$.

Define the set of \emph{admissible, $N$-periodic configurations} as

\[
\Omega_{N, l}=\{\omega\in\Omega_{N, l}^{per}\ |\ \omega\ \textrm{satisfies}\ (\Omega1)\textrm{--}(\Omega3)\}
\]

and the set of all \emph{admissible configurations} as $\Omega_\infty=\{\omega\in\Omega\ |\ \omega\ \textrm{satisfies}\ (\Omega1)\textrm{--}(\Omega3)\}$. Note that for $\omega\in\Omega_{N,l}^{per}$, $(\Omega 2)$ is fulfilled if and only if  $\hat\omega$ is a bijection. This observation is a consequence of the periodic boundary conditions (\ref{periodic}) and the continuity of $\hat\omega$.

\begin{figure}[!ht]
  \centering
    \includegraphics[width=0.8\textwidth]{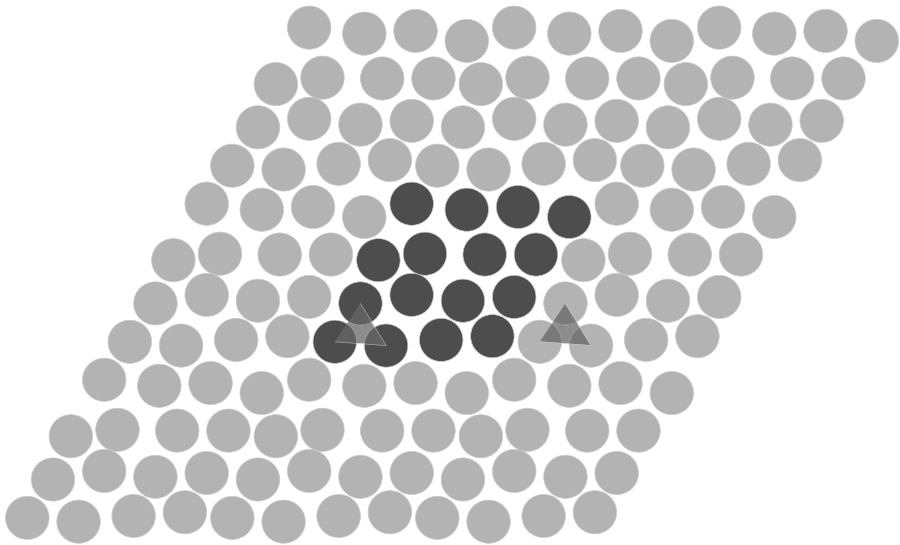}
    \caption{A part of an admissible, $4$-periodic configuration.}
    \label{figure}
\end{figure}

The set $\Omega_{N, l}$ is non-empty and open in $(\RRR^2)^{I_N}$. The scaled standard configuration $\omega_l(x)=lx$, for $x\in I$ and $1<l<1+\epsilon$, is an element both of $\Omega_{N, l}$ and $\Omega_{\infty}$. Figure \ref{figure} illustrates a part of an admissible, $4$-periodic configuration. The points of the configuration are illustrated by hard disks with radii 1/2. The image of $I_4$ and those of two equivalent triangles are shaded in the figure.

Clearly,  $0<\delta_0\otimes\lambda^{I_N \setminus\{0\}} (\Omega_{N, l})<\infty$ with the Lebesgue measure $\lambda$ on $\RRR^2$ and the Dirac measure $\delta_0$ in $0\in\RRR^2$. The lower bound holds because sections of $\Omega_{N, l}$ are non-empty and open in $(\RRR^2)^{I_N\setminus\{0\}}$ if $\omega(0)$ is fixed; the upper bound is a consequence of the parameter $\epsilon$ in $(\Omega 1)$. Let the probability measure $P_{N, l}$ be

\[
P_{N, l}(A)=\frac{\delta_0\otimes\lambda^{I_N \setminus\{0\}} (\Omega_{N, l}\cap A)}{\delta_0\otimes\lambda^{I_N \setminus\{0\}} (\Omega_{N, l})}
\]

for any Borel measurable set $A\in \cF_N$, thus $P_{N, l}$ is the uniform distribution on the set $\Omega_{N, l}$ with respect to the \emph{reference measure} $\delta_0\otimes\lambda^{I_N \setminus\{0\}}$. The first factor in this product refers to the component $\omega(0)$ of $\omega\in\Omega$. We call the measures $P_{N, l}$ \emph{finite-volume Gibbs measures} and the parameter $l$ in the definition of $\Omega_{N,l}$ and $P_{N,l}$ is the \emph{pressure parameter} of the system. In fact, the pressure parameter $l$ controls the density of periodic configurations, and therefore is inversely related to the physical pressure of the system.

\section{Result}

We have the following finite-volume result:

\begin{thm} \label{thm1} For $\epsilon$ sufficiently small $($such that equation \textnormal{($\ref{area1}$)} holds for all $1<a_i<1+\epsilon)$, one has

\begin{equation}
\lim_{l\downarrow 1}\sup_{N\in\NNN}\sup_{\triangle\in\cT_N} E_{P_{N,l}}[\ |\nabla\hat\omega(\triangle)-\textnormal{Id} |^2\ ]=0
\label{fnvolume}
\end{equation}

with the constant value of the Jacobian $\nabla\hat\omega(\triangle)$ on the set $\triangle\in\cT_N$.
\end{thm}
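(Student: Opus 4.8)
The plan is to combine three ingredients: a pointwise geometric estimate controlling the deviation of each triangle map from a rotation by its area excess, the deterministic area identity forced by the periodic boundary conditions, and the Friesecke--James--Müller geometric rigidity estimate, the last applied on the rhombus $U_N$ in a scale invariant fashion so that every constant is independent of $N$.

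Fix an admissible $\omega\in\Omega_{N,l}$ and abbreviate $F_\triangle=\nabla\hat\omega(\triangle)$. First I would establish, as the content of the estimate (area1), the pointwise bound $\mathrm{dist}(F_\triangle,SO(2))^2\le C(\det F_\triangle-1)$ for every admissible triangle once $\epsilon$ is small enough. Writing $\lambda_1,\lambda_2$ for the singular values of $F_\triangle$, condition $(\Omega1)$ confines the three reference edge directions (which are spaced at $60^\circ$) to images of length in $(1,1+\epsilon)$, and this together with $(\Omega3)$ forces $(\lambda_1,\lambda_2)$ into a small neighbourhood of $(1,1)$ inside the cone where $\det F_\triangle=\lambda_1\lambda_2>1$; on that region one checks $(\lambda_1-1)^2+(\lambda_2-1)^2\le C(\lambda_1\lambda_2-1)$, which is the claimed bound. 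Next I would record the deterministic area identity: since $\hat\omega-l\,\mathrm{id}$ is $NI$ periodic and $\hat\omega$ is an orientation preserving bijection, the change of variables formula gives $\tfrac{1}{|U_N|}\int_{U_N}\det\nabla\hat\omega\,\dd x=l^2$, equivalently $|\cT_N|^{-1}\sum_{\triangle}\det F_\triangle=l^2$. Combining the two bounds yields, for every admissible $\omega$,
\[
\frac{1}{|U_N|}\int_{U_N}\mathrm{dist}(\nabla\hat\omega,SO(2))^2\,\dd x\le C(l^2-1).
\]

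The core step is then the rigidity estimate. Because $U_N=N\,U_1$ is a fixed rhombus dilated by $N$, and the Friesecke--James--Müller constant is invariant under dilation of the domain, there is a constant $C_{\mathrm{rig}}$ independent of $N$ and a rotation $R=R(\omega)\in SO(2)$ with $\tfrac{1}{|U_N|}\int_{U_N}|\nabla\hat\omega-R|^2\,\dd x\le C_{\mathrm{rig}}\,C(l^2-1)$. To identify $R$ I would use the periodicity once more: the $NI$ periodicity of $\hat\omega-l\,\mathrm{id}$ makes the mean gradient over the fundamental domain equal to $l\,\mathrm{Id}$, so by Jensen's inequality $|R-l\,\mathrm{Id}|^2$ is bounded by the same right hand side; since the nearest rotation to $l\,\mathrm{Id}$ is $\mathrm{Id}$ at distance $\sqrt2\,(l-1)$, this gives $|R-\mathrm{Id}|^2\le C(l-1)$. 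A final triangle inequality produces the deterministic bound $|\cT_N|^{-1}\sum_{\triangle}|F_\triangle-\mathrm{Id}|^2\le C(l-1)$, valid for all admissible $\omega$ with $C$ independent of $N$, which I would integrate against $P_{N,l}$.

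It remains to pass from this average over triangles to the supremum over triangles appearing in (\ref{fnvolume}). Here I would invoke the invariance of $P_{N,l}$ under the lattice translations acting on $I_N$: since $|F_\triangle-\mathrm{Id}|^2$ is unchanged by adding a constant to $\omega$, the expectation $E_{P_{N,l}}[|F_\triangle-\mathrm{Id}|^2]$ depends only on the orientation $z\in\{1,\tau\}$ of $\triangle$, and the average of the two orientation values is at most $C(l-1)$; as both are nonnegative, each is at most $2C(l-1)$, whence $\sup_{\triangle\in\cT_N}E_{P_{N,l}}[|F_\triangle-\mathrm{Id}|^2]\le 2C(l-1)$ uniformly in $N$, and letting $l\downarrow 1$ gives (\ref{fnvolume}). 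The main obstacle I anticipate is exactly the uniformity in $N$: a naive application of rigidity on $U_N$ would produce an $N$ dependent constant, and the whole argument hinges on exploiting the exact self similarity $U_N=N\,U_1$ so that the rigidity constant does not degenerate, together with the fact that the periodic boundary conditions pin the macroscopic gradient to $l\,\mathrm{Id}$ rather than merely to some rotation.
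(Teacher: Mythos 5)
Your overall architecture mirrors the paper's proof almost exactly: scale-invariant Friesecke--James--M\"uller rigidity on $U_N$ (the paper's Remark \ref{remark}), the deterministic identity $\frac{1}{|U_N|}\int_{U_N}\det\nabla\hat\omega\,\dd x=l^2$ forced by $(\Omega 2)$ and the periodic boundary conditions (the paper's (\ref{maxarea1})), pinning the rigid rotation to $l\,\mathrm{Id}$ via the vanishing mean of $\nabla\hat\omega-l\,\mathrm{Id}$ (the paper does this by an exact orthogonality/Pythagoras identity where you use Jensen plus the triangle inequality, a cosmetic difference), and translation invariance to pass from the triangle-average to the supremum (the paper additionally uses the reflection $\phi$ to make all $\nabla\hat\omega(\triangle)$ identically distributed; your two-orientation-classes factor of $2$ is an acceptable substitute). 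However, your first step has a genuine gap. The inequality $\mathrm{dist}(F,\mathrm{SO}(2))^2\le C(\det F-1)$ is \emph{false} on the region you reduce to, namely a neighbourhood of $(\lambda_1,\lambda_2)=(1,1)$ intersected with $\{\lambda_1\lambda_2>1\}$: take $\lambda_1=1+t$ and $\lambda_2=(1+t^4)/(1+t)$, so that $\det F-1=t^4$ while $\mathrm{dist}(F,\mathrm{SO}(2))^2=(\lambda_1-1)^2+(\lambda_2-1)^2\sim 2t^2$, which defeats any fixed constant $C$ as $t\downarrow 0$. The determinant excess alone cannot control the trace-free (shear) part of $F$, so the claim ``on that region one checks'' does not check.

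What rescues the pointwise bound on the \emph{actual} constraint set is exactly the information your singular-value reduction discards: the hard-core lower bound $|Fv_i|>1$ for all three edge directions simultaneously. Indeed, for a shear as above one of the three $60^\circ$-spaced directions necessarily contracts below length $1$ (since $\sum_i v_iv_i^t=\frac{3}{2}\,\mathrm{Id}$, the three angles $2\varphi_i$ to the singular frame are $120^\circ$ apart, so one has cosine $\le -\frac12$), hence such matrices are excluded by $(\Omega 1)$ --- but turning this into a quantitative estimate requires an argument genuinely involving all three directions, not a check in the $(\lambda_1,\lambda_2)$-plane. That is the paper's route: Lemma \ref{triangle} (Theil) gives $\mathrm{dist}(F,\mathrm{SO}(2))^2\le C\max_i(a_i-1)^2$; since $0<a_i-1<\epsilon$ one has $(a_i-1)^2\le\epsilon\,(a_i-1)$, and the Heron/Taylor inequality (\ref{area1}) then yields $\mathrm{dist}(F,\mathrm{SO}(2))^2\le C'\epsilon\,(\lambda(\hat\omega(\triangle))-\lambda(\triangle_{0,1}))$, i.e.\ your desired bound, with constant even proportional to $\epsilon$; this is also precisely where the hypothesis ``$\epsilon$ small enough that (\ref{area1}) holds'' enters, which your sketch invokes only nominally. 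With your first step replaced by this Lemma-\ref{triangle}-plus-Heron argument, the remainder of your proof goes through and gives (\ref{fnvolume}).
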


Weak limits of $(P_{N,l})_{N\in\NNN}$ are called \emph{infinite-volume Gibbs measures}. Since the convergence in Theorem \ref{thm1} is uniform in $N$, there is an infinite-volume Gibbs measure $P$ such that $E_{P} [ \ |\nabla\hat\omega(\triangle)-\textrm{Id} |^2\ ]$ is small on every triangle $\triangle\in\cT$. This is actually a result about a spontaneous breaking of the rotational symmetry in a strong sense. The set $\Omega_\infty$ is rotational-invariant, and this symmetry is broken by some infinite-volume Gibbs measure as per (\ref{fnvolume}). Spontaneous breaking of the rotational symmetry in the usual sense can be proved immediately. This observation is formulated and proved in the next proposition. A similar result and its proof is also mentioned in \cite[Section 1.3]{HMR}.

\begin{prop}
For all $l\in(1, 1+\epsilon)$, $N\in \NNN$, $x\in I$ and $z\in I$ with $(0,z)\in E$, we have

\begin{align}
E_{P_{N,l}}[\omega(x+z)-\omega(x)]=lz.
\label{easysymmbreak}
\end{align}

\begin{proof}
We follow the ideas stated in \cite[Section 1.3]{HMR}. The reference measure $\delta_0 \otimes \lambda^{I_N\setminus \{0\}}$ is invariant under the bijective translations

\begin{equation}
\psi_b: \Omega^{per}_{N, l}\rightarrow \Omega^{per}_{N,l} \quad (\omega(x))_{x\in I}\mapsto (\omega(x+b)-\omega(b))_{x\in I}
\label{translation}
\end{equation}

for all $b\in I$. The set $\Omega_{N,l}$ is also invariant under $\psi^{-1}_b=\psi_{-b}$. As a consequence, the measures $P_{N,l}$ are invariant under $\psi_b$ for all $b\in I$, and the random vectors $\omega(x+z)-\omega(x)$ have the same distribution under $P_{N,l}$ for all $x\in I$ and a fixed $z$. Therefore, we obtain (\ref{easysymmbreak}) from the periodic boundary conditions (\ref{periodic}).
\end{proof}
\end{prop}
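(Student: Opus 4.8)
The plan is to use the translation invariance of $P_{N,l}$ to show that the expected increment is independent of the base point $x$, and then to pin down its value with the periodic boundary conditions (\ref{periodic}) via a telescoping sum. Since $(0,z)\in E$ gives $|z|=1$ and $E$ is translation invariant, $(x,x+z)\in E$ for every $x\in I$, so $(\Omega 1)$ bounds $|\omega(x+z)-\omega(x)|$ and all the expectations below are finite.

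For the invariance step, I would first verify that $P_{N,l}$ is invariant under the maps $\psi_b$ from (\ref{translation}). The set $\Omega_{N,l}$ is $\psi_b$-invariant: the identity $(\psi_b\omega)(x+Ny)=\omega(x+Ny+b)-\omega(b)=(\psi_b\omega)(x)+lNy$ preserves the periodic boundary conditions; $(\Omega 1)$ is preserved because $E$ is translation invariant; and $(\Omega 2)$–$(\Omega 3)$ are preserved because $\widehat{\psi_b\omega}$ differs from $\hat\omega$ only by a translation in the domain (by $b\in I$, which maps triangles to triangles) and in the range, neither of which affects injectivity or the sign of the Jacobian. Since the reference measure $\delta_0\otimes\lambda^{I_N\setminus\{0\}}$ is likewise $\psi_b$-invariant, so is the normalized measure $P_{N,l}$.

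For the reduction and telescoping, applying the invariance to the function $\omega\mapsto\omega(z)-\omega(0)$ and using $(\psi_x\omega)(z)-(\psi_x\omega)(0)=\omega(x+z)-\omega(x)$ gives
\begin{equation}
E_{P_{N,l}}[\omega(x+z)-\omega(x)]=E_{P_{N,l}}[\omega(z)-\omega(0)]=:v_z,
\label{vz}
\end{equation}
so the expected increment is a constant $v_z$ independent of $x$. To evaluate $v_z$, I would telescope along the ray $\{kz\}$: since $z\in I$, also $kz\in I$, and
\[
\omega(Nz)-\omega(0)=\sum_{k=0}^{N-1}\bigl(\omega((k+1)z)-\omega(kz)\bigr),
\]
whose expectation equals $Nv_z$ by (\ref{vz}). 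On the other hand, (\ref{periodic}) with $x=0$, $y=z$ gives $\omega(Nz)-\omega(0)=lNz$ pointwise. Comparing the two yields $v_z=lz$, which is (\ref{easysymmbreak}).

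I expect the only real point of care to be the $\psi_b$-invariance of the reference measure: because the coordinate $\omega(0)$ carries the Dirac mass $\delta_0$, one must check that $\psi_b$ maps $\{\omega(0)=0\}$ into itself (which holds since $(\psi_b\omega)(0)=0$) and that the induced affine map on the remaining free coordinates has Jacobian of absolute value one, so that $\lambda^{I_N\setminus\{0\}}$ is preserved. Once this is in place, the remainder is the short computation above.
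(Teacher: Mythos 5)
Your proposal is correct and follows essentially the same route as the paper: invariance of the reference measure and of $\Omega_{N,l}$ under the translations $\psi_b$, hence invariance of $P_{N,l}$, so that $E_{P_{N,l}}[\omega(x+z)-\omega(x)]$ is independent of $x$, and then the periodic boundary conditions (\ref{periodic}) pin down the value. You merely make explicit what the paper leaves implicit, namely the telescoping sum $\omega(Nz)-\omega(0)=\sum_{k=0}^{N-1}(\omega((k+1)z)-\omega(kz))=lNz$ and the unit-Jacobian check for the $\psi_b$-invariance of $\delta_0\otimes\lambda^{I_N\setminus\{0\}}$, both of which are sound.
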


The expression $|\omega(x+z)-\omega(x)|$ is $P_{N,l}$-almost surely uniformly bounded in $N$, hence (\ref{easysymmbreak}) carries over to weak limits of $P_{N,l}$ as $N\to\infty$. Consequently, such weak limits are not rotational-invariant. However, in the next section, we show Theorem \ref{thm1}, which states symmetry breaking in a much stronger sense.

\section{Proof}

As in \cite{HMR}, the central argument is the following rigidity theorem from \cite[Theorem 3.1]{Fries}, which generalizes Liouville's Theorem.

\begin{thm}[Friesecke, James and M\"uller]
Let $U$ be a bounded Lipschitz domain in $\RRR^n, \ n\geq 2$. There exists a constant $C(U)$ with the following property: For each $v\in W^{1,2}(U, \RRR^n)$ there is an associated rotation $R\in \textnormal{SO($n$)}$ such that

\begin{equation*}
||\nabla v-R||_{L^2(U)}\leq C(U)||\textnormal{dist}(\nabla v, \textnormal{SO}(n))||_{L^2(U)}.
\end{equation*}

\label{fries}
\end{thm}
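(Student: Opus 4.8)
Although Theorem \ref{fries} is quoted from \cite{Fries} as a black box, I sketch the strategy one would use to establish it, following Friesecke, James and Müller, whose two pillars are a Lipschitz truncation of Sobolev maps and the linear rigidity estimate of Korn. The plan is to reduce first to the model case where $U$ is a cube $Q$. For a general bounded Lipschitz domain one covers $\overline U$ by finitely many open pieces, each bi-Lipschitz equivalent to a cube, transfers the cube estimate through the change of variables (which only inflates the constant by a factor depending on the bi-Lipschitz constants, hence on $U$), and then \emph{chains} the resulting local rotations into one global rotation. Writing $e:=\|\mathrm{dist}(\nabla v,\mathrm{SO}(n))\|_{L^2(U)}$, if $R_i$ and $R_j$ are the rotations obtained on two overlapping pieces, then $|R_i-R_j|^2\,|U_i\cap U_j|\le C(\|\nabla v-R_i\|_{L^2(U_i)}^2+\|\nabla v-R_j\|_{L^2(U_j)}^2)\le C e^2$, so by connectedness of $U$ all the $R_i$ lie within $Ce$ of a common $R$, and the global estimate follows.

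On the cube the first step is the truncation. Since every rotation has norm $\sqrt n$, one has $|\nabla v|\le\mathrm{dist}(\nabla v,\mathrm{SO}(n))+\sqrt n$, so on the set $\{|\nabla v|>\lambda\}$, for $\lambda$ a fixed multiple of $\sqrt n$, the distance already dominates $|\nabla v|$ up to a constant factor and hence $\int_{\{|\nabla v|>\lambda\}}|\nabla v|^2\le C e^2$. Applying the maximal-function Lipschitz truncation at this fixed level $\lambda$ produces $w\in W^{1,\infty}(Q,\RRR^n)$ with $\mathrm{Lip}(w)\le C$, coinciding with $v$ off a bad set $B$ of measure $|B|\le C e^2/\lambda^2$, and satisfying both $\int_Q|\nabla v-\nabla w|^2\le C e^2$ and $\int_Q\mathrm{dist}^2(\nabla w,\mathrm{SO}(n))\le C e^2$. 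By the triangle inequality it then suffices to prove the estimate for the bounded-gradient map $w$.

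The backbone of the bounded-gradient case is Korn's second inequality: for $\phi\in W^{1,2}(Q,\RRR^n)$ one has $\min_{A\ \mathrm{skew}}\|\nabla\phi-A\|_{L^2}\le C\|\mathrm{sym}\,\nabla\phi\|_{L^2}$. The link to the nonlinear distance is the polar-decomposition identity $\mathrm{dist}^2(F,\mathrm{SO}(n))=|\sqrt{F^{\mathsf T}F}-\mathrm{Id}|^2$, which for $F$ close to a rotation $R_0$ behaves to leading order like $|\mathrm{sym}(R_0^{\mathsf T}F)-\mathrm{Id}|^2$. Thus, in the regime where $\nabla w$ stays in a fixed small tubular neighborhood of a single rotation $R_0$, applying Korn to $\phi(x)=R_0^{\mathsf T}w(x)-x$ converts control of the symmetric part (which there is comparable to $\mathrm{dist}(\nabla w,\mathrm{SO}(n))$) into control of the full gradient, so $\nabla\phi$ is $L^2$-close to a constant skew matrix $A$ and $\nabla w$ is $L^2$-close to the rotation $R:=R_0\exp(A)$, yielding $\|\nabla w-R\|_{L^2}^2\le C e^2$.

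The step I expect to be the main obstacle is precisely passing from the \emph{bounded} gradient delivered by the truncation to this \emph{near a single rotation} regime: the pointwise identity above is useless when the nearest rotation to $\nabla w(x)$ wanders over $\mathrm{SO}(n)$, linearising around a fixed $R_0$ fails wherever $\nabla w$ is far from $R_0$, and the apparent circularity — that showing the nearest-rotation field is essentially constant is itself the rigidity — must be broken. The resolution is a Calderón--Zygmund / stopping-time decomposition of $Q$ into subcubes on which $\nabla w$ is on average close to some rotation, using $|\{\mathrm{dist}(\nabla w,\mathrm{SO}(n))>\delta\}|\le C e^2/\delta^2$ together with $\mathrm{Lip}(w)\le C$ to control the complementary far set by $C e^2$ (where one merely uses $|\nabla w-R|\le C$), while the change of the local rotation across adjacent subcubes is again dominated by the local nonlinear energy and chained to one global $R$ exactly as in the first paragraph. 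Carrying out this decomposition while keeping every constant dependent only on $n$ and $U$, hence uniform over all $v$, is the delicate heart of the argument; an alternative is a De Giorgi-type excess-decay via harmonic approximation, but this yields the global $L^2$ statement without a smallness hypothesis only after the same truncation, so I would keep to the truncation-and-Korn route.
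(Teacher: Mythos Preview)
The paper does not prove Theorem~\ref{fries}; it is quoted verbatim from \cite{Fries} and used as a black box, so there is no proof in the paper to compare your proposal against. What the paper does add is Remark~\ref{remark} on the scale invariance $C(\alpha U)=C(U)$, which is the only property of the constant actually needed downstream.

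Your sketch is a fair outline of the Friesecke--James--M\"uller argument: reduction to a cube via bi-Lipschitz patching and chaining of local rotations, Lipschitz truncation at a fixed level to pass to bounded-gradient maps, and then a linearisation feeding into Korn. You correctly identify the crux as bridging ``bounded gradient'' and ``near a single rotation'', and you are honest that this is where the real work lies. One remark on your proposed resolution: the route in \cite{Fries} is not quite a Calder\'on--Zygmund stopping-time decomposition into good cubes followed by chaining; rather, after truncation they exploit a weak-type estimate on second differences of $w$ (a substitute for control of $\nabla^2 w$) to show directly that the measurable nearest-rotation field has small oscillation in a suitable sense, which then lets the linearised (Korn) estimate close. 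Your CZ-and-chaining picture is morally adjacent and would likely be made to work, but as written it is a plausible plan rather than a proof, and it diverges in mechanism from the original.
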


Liouville's Theorem states that a function $v$, fulfilling $\nabla v(x)\in\mathrm{SO}(n)$ almost everywhere, is a rigid motion. Theorem \ref{fries} generalizes this result. We are going to set $v=\hat\omega|_{U_N}$ and $U=U_N$, which is a bounded Lipschitz domain. The function $\hat\omega|_{U_N}$  is affine linear on each triangle $\triangle\in\cT_N$, thus piecewise affine linear on $U_N$. As a consequence, $\hat\omega|_{U_N}$ belongs to the class $W^{1,2}(U_N, \RRR^n)$. The following remark, which also appears in \cite{Fries} at the end of Section 3, is essential to achieve uniformity in Theorem \ref{thm1} in the parameter $N$.

\begin{remark}

The constant $C(U)$ in Theorem \ref{fries} is invariant under scaling of the domain: $C(\alpha U)=C(U)$ for all $\alpha>0$. By setting $v_\alpha(\alpha x)=\alpha v(x)$ for $x\in U$, we have $\nabla v_\alpha(\alpha x)=\nabla v(x)$, and therefore $||\nabla v_\alpha-R||_{L^2(\alpha U)}=\alpha^{n/2}||\nabla v-R||_{L^2(U)}$, and $||\textnormal{dist}(\nabla v_\alpha, \textnormal{SO}(n))||_{L^2(\alpha U)}\ =\ \alpha^{n/2}\ ||\textnormal{dist}(\nabla v, \textnormal{SO}(n))||_{L^2(U)}$. Consequently, the constants $C(U_N)$ for the domains $U_N$ $(N\geq 1)$ can be chosen independently of $N$.
\label{remark}
\end{remark}

We are going to show that for $\omega\in\Omega_{N,l}$, the $L^2$-distance on $U_N$ of the Jacobian matrix $\nabla\hat\omega$ from the scaled identity matrix $l\ \textrm{Id}$ can be controlled by the difference of the areas of $\hat\omega(U_N)$ and $U_N$. Due to the periodic boundary conditions, $\lambda(\hat\omega(U_N))$ does not depend on configurations $\omega$ with $(\Omega 2)$, thus the mentioned area difference provides a suitable uniform control on the set $\Omega_{N,l}$. First, we show that the $L^2$-distance of $\nabla\hat\omega$ from the scaled identity $l\ \textrm{Id}$ can be controlled by the sum over the squared deviations of the triangles' side lengths from one. The one should be associated with the side length of an equilateral triangle. To achieve this estimate, we will apply the rigidity theorem, Theorem \ref{fries}, but first we cite an analogous result which holds locally on each triangle.

The following lemma provides the desired estimate on each triangle. It states that the distance from $\textrm{SO}(2)$ of a linear map near $\textrm{SO}(2)$ can be controlled by terms which measure how the linear map deforms the side lengths of a standard equilateral triangle.

\begin{lem}
There is a positive constant $C$ such that, for all linear maps  $A: \RRR^2 \rightarrow \RRR^2$ with $\textnormal{det}(A)>0$ and the property

\begin{equation}
||Av_i|-1|\leq 1 \quad \textrm{for all} \ i\in\{1,2,3\} 
\label{requirement}
\end{equation}

where $v_1=(1, 0)$, $v_2=(\frac{1}{2}, \frac{\sqrt3}{2})$, $v_3=v_1-v_2$, the following inequality holds:

\begin{equation}
\textnormal{dist}\left(A\ ,\ \textnormal{SO}(2)\right)^2:=\inf_{R\in \textnormal{SO}(2)} \left|A-R\right|^2\leq C \max_{i\in\{1, 2, 3\}} ||Av_i|-1|^2
\label{triangleeq}
\end{equation}

where $|M|=\sqrt{\textnormal{tr}(M^tM)}$ is the Frobenius norm and $|v|$ is the Euclidean norm of $v$.
\label{triangle}
\end{lem}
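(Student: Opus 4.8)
The plan is to pass from the linear map $A$ to the symmetric positive definite matrix $M=A^{t}A$, whose quadratic form in the directions $v_1,v_2,v_3$ is exactly what the hypothesis (\ref{requirement}) controls, and then to read $\textnormal{dist}(A,\textnormal{SO}(2))$ off from the eigenvalues of $M$. First I would reduce the left-hand side to the singular values of $A$: using the polar decomposition $A=R(A^{t}A)^{1/2}$ with $R\in\textnormal{SO}(2)$, where $\det(A)>0$ is precisely what guarantees that the nearest element of $\textnormal{SO}(2)$ (rather than of $\textnormal{O}(2)$) is realized by this rotation, a short computation with the singular value decomposition gives
\[
\textnormal{dist}(A,\textnormal{SO}(2))^{2}=(\sigma_1-1)^{2}+(\sigma_2-1)^{2},
\]
where $\sigma_1,\sigma_2>0$ are the singular values of $A$. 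Writing $\lambda_j=\sigma_j^{2}$ for the eigenvalues of $M$, this equals $\sum_{j}(\sqrt{\lambda_j}-1)^{2}$.

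Next I would express $M$ through the three measured lengths. Since $|Av_i|^{2}=v_i^{t}Mv_i$ and the three rank-one matrices $v_iv_i^{t}$ span the three-dimensional space of symmetric $2\times2$ matrices (the $v_i$ being pairwise non-parallel), the numbers $|Av_i|^{2}$ determine $M$ affinely and invertibly; solving the resulting linear system writes each entry of $M-\textnormal{Id}$ as a fixed linear combination of the quantities $|Av_i|^{2}-1$. Setting $\delta_i=|Av_i|-1$, so that $|Av_i|^{2}-1=2\delta_i+\delta_i^{2}$, and using the hypothesis $|\delta_i|\le1$ from (\ref{requirement}) to bound $|2\delta_i+\delta_i^{2}|\le 3|\delta_i|$, I obtain $|M-\textnormal{Id}|^{2}\le C'\max_i\delta_i^{2}$ for an absolute constant $C'$.

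Finally I would compare the two sides. The eigenvalues of $M-\textnormal{Id}$ are $\lambda_j-1$, so $|M-\textnormal{Id}|^{2}=\sum_j(\lambda_j-1)^{2}$, while the elementary inequality $|\sqrt{\lambda}-1|\le|\lambda-1|$, valid for every $\lambda\ge0$ because $\sqrt{\lambda}+1\ge1$, yields $\sum_j(\sqrt{\lambda_j}-1)^{2}\le\sum_j(\lambda_j-1)^{2}$. Chaining these inequalities gives $\textnormal{dist}(A,\textnormal{SO}(2))^{2}\le|M-\textnormal{Id}|^{2}\le C'\max_i\delta_i^{2}$, which is (\ref{triangleeq}) with $C=C'$.

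I expect the one genuinely load-bearing point to be the role of the side condition (\ref{requirement}): without an a priori bound on the $\delta_i$ the quadratic terms $\delta_i^{2}$ cannot be absorbed into a linear bound, so the estimate is inherently local and the constant would otherwise degenerate. A second, more pleasant subtlety is that the passage from the eigenvalues of $M$ to the singular values of $A$ needs no lower bound on the $\lambda_j$, since the factor $\sqrt{\lambda}+1\ge1$ makes the square-root step automatic; positive definiteness of $M$, guaranteed by $\det(A)>0$, is all that is required.
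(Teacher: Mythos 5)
Your proof is correct, and it is worth noting that the paper itself contains no proof of Lemma \ref{triangle} at all: it cites Lemma 4.2 in the appendix of Theil's crystallization paper [Th06], remarking only that Theil's argument, formulated there with a constant $\alpha_0$ in place of the bound $1$ in (\ref{requirement}), covers the case $\alpha_0=1$. So your argument is a self-contained substitute rather than a variant of something in the paper. Each step checks out: the identity $\textnormal{dist}(A,\textnormal{SO}(2))^2=(\sigma_1-1)^2+(\sigma_2-1)^2$ does hold under $\det(A)>0$, since then both orthogonal factors in the singular value decomposition can be taken in $\textnormal{SO}(2)$ and the trace maximization over rotations is attained at the identity; the Gram-matrix step works because all three $v_i$ are unit vectors (so $v_i^t(M-\textnormal{Id})v_i=|Av_i|^2-1$ exactly) and the rank-one matrices $v_iv_i^t$ span the three-dimensional space of symmetric $2\times2$ matrices, the $v_i$ being pairwise non-parallel; and the absorption $|2\delta_i+\delta_i^2|\le 3|\delta_i|$ together with $|\sqrt{\lambda}-1|\le|\lambda-1|$ closes the chain with an explicit absolute constant. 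Your diagnosis of the load-bearing points is also accurate: the a priori bound (\ref{requirement}) is what keeps the quadratic correction $\delta_i^2$ linear in $\max_i|\delta_i|$ (for general $\alpha_0$ your constant degrades only polynomially, consistent with the paper's remark that Theil's proof applies for any fixed $\alpha_0$), while $\det(A)>0$ enters only through the singular-value formula for the distance to $\textnormal{SO}(2)$; the square-root comparison needs no lower bound on the eigenvalues of $M$. Your route via the polar decomposition and inversion of the length system is essentially the standard one and in the same spirit as Theil's appendix argument, but it has the merit of being short, explicit, and entirely elementary.
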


A proof can be found in \cite[Lemma 4.2. in the appendix]{Theil}. In this proof the requirement (\ref{requirement}) is formulated by means of a positive constant $\alpha_0$: $||Av_i|-1|\leq \alpha_0 \quad \textrm{for all} \ i\in\{1,2,3\}$, although the proof also applies to the special case $\alpha_0=1$ as stated in Lemma \ref{triangle}.

Now, we prove the mentioned estimate, which provides control over the $L^2$-distance of $\nabla\hat\omega$ from the scaled identity matrix in terms of the side length deviations.

\begin{lem}
There is a constant $c$ such that for all $N\geq 1$ and $1<l<1+\epsilon$ the inequality

\begin{equation}
||\ \nabla\hat\omega-l\ \textnormal{Id}\ ||^2_{L^2(U_N)}\leq c \sum_{(x,y)\in E_N} (|\omega(x)-\omega(y)|-1)^2
\label{sideeq}
\end{equation}

holds for all $\omega\in\Omega_{N, l}$, and hence

\begin{equation}
E_{P_{N,l}}[\ ||\ \nabla\hat\omega-l\ \textnormal{Id}\ ||^2_{L^2(U_N)}\ ]\leq c \sum_{(x,y)\in E_N} E_{P_{N,l}}[\ (|\omega(x)-\omega(y)|-1)^2\ ]
\label{lemmaeq2}
\end{equation}

where the $L^2$-norm is defined with respect to some scalar product on $\RRR^{2\times 2}$, and $|\cdot|$ denotes the Euclidean norm on $\RRR^2$.
\label{side}
\end{lem}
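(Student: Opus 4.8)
The plan is to combine the local estimate of Lemma \ref{triangle} with the global rigidity theorem, Theorem \ref{fries}, and then to pin the resulting global rotation down to $l\,\textnormal{Id}$ using the periodic boundary conditions. Concretely, I would first pass from the per-triangle control of $\textnormal{dist}(\nabla\hat\omega,\textnormal{SO}(2))$ to an integrated one, then invoke rigidity to replace $\textnormal{SO}(2)$ by a single rotation $R$, and finally show $R$ must be close to $l\,\textnormal{Id}$.

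For the first step, note that on each triangle $\triangle\in\cT_N$ the Jacobian $\nabla\hat\omega$ is a constant matrix $A_\triangle$ with $\det A_\triangle>0$ by $(\Omega 3)$. Its three edge directions are unit vectors of the form $Qv_1,Qv_2,Qv_3$ for a rotation $Q\in\textnormal{SO}(2)$ depending only on whether $z=1$ or $z=\tau$ (the two triangle types are related by a rotation through $\pi/3$), and $|A_\triangle Qv_i|=|\omega(x)-\omega(y)|$ for the corresponding bond $(x,y)$. Since these lengths lie in $(1,1+\epsilon)$ and $\epsilon\le1$, requirement (\ref{requirement}) holds for $A_\triangle Q$; applying Lemma \ref{triangle} to $A_\triangle Q$ and using that right multiplication by $Q$ leaves both the Frobenius norm and $\textnormal{dist}(\cdot,\textnormal{SO}(2))$ invariant, I would get $\textnormal{dist}(A_\triangle,\textnormal{SO}(2))^2\le C\sum_i(|\omega(x)-\omega(y)|-1)^2$. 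Multiplying by the constant triangle area, summing over $\cT_N$, and using that each bond lies in exactly two triangles then yields $\|\textnormal{dist}(\nabla\hat\omega,\textnormal{SO}(2))\|_{L^2(U_N)}^2\le C'\sum_{(x,y)\in E_N}(|\omega(x)-\omega(y)|-1)^2$. Next I would apply Theorem \ref{fries} to $v=\hat\omega|_{U_N}\in W^{1,2}(U_N,\RRR^2)$, obtaining a rotation $R\in\textnormal{SO}(2)$ with $\|\nabla\hat\omega-R\|_{L^2(U_N)}\le C(U_N)\|\textnormal{dist}(\nabla\hat\omega,\textnormal{SO}(2))\|_{L^2(U_N)}$, where $C(U_N)$ is independent of $N$ by Remark \ref{remark}.

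The hard part will be the third step: since edge lengths are rotation invariant, the local estimate cannot by itself distinguish $l\,\textnormal{Id}$ from $lR$, so the periodic boundary conditions must be exploited to force $R$ close to $l\,\textnormal{Id}$. The key observation is that $f:=\hat\omega-l\cdot\mathrm{id}$ (with $\mathrm{id}(x)=x$) is $NI$-periodic, because (\ref{periodic}) extends by affinity to $\hat\omega(p+Ny)=\hat\omega(p)+lNy$, while $U_N$ is a fundamental parallelogram for the lattice $NI$. The divergence theorem then gives $\int_{U_N}\nabla f\,\dd x=0$ (opposite sides of $\partial U_N$ carry equal values of $f$ and opposite outward normals), so the average of $\nabla\hat\omega$ over $U_N$ is exactly $l\,\textnormal{Id}$. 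Consequently, by Cauchy–Schwarz,
\[
|R-l\,\textnormal{Id}|=\frac{1}{\lambda(U_N)}\left|\int_{U_N}(R-\nabla\hat\omega)\,\dd x\right|\le \lambda(U_N)^{-1/2}\,\|\nabla\hat\omega-R\|_{L^2(U_N)},
\]
i.e. $\|R-l\,\textnormal{Id}\|_{L^2(U_N)}\le\|\nabla\hat\omega-R\|_{L^2(U_N)}$.

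Finally, the triangle inequality $\|\nabla\hat\omega-l\,\textnormal{Id}\|_{L^2(U_N)}\le\|\nabla\hat\omega-R\|_{L^2(U_N)}+\|R-l\,\textnormal{Id}\|_{L^2(U_N)}\le2\|\nabla\hat\omega-R\|_{L^2(U_N)}$, chained with the first two steps, would give (\ref{sideeq}) with $c=4\,C(U_N)^2C'$, a constant uniform in both $N$ and $l$. Taking $E_{P_{N,l}}$ of (\ref{sideeq}) and using linearity of expectation over the finite bond sum then produces (\ref{lemmaeq2}).
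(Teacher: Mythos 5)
Your proposal is correct and follows essentially the same route as the paper: the local estimate of Lemma \ref{triangle} per triangle, the rigidity theorem with the $N$-independent constant from Remark \ref{remark}, and then the periodicity of $\hat\omega - l\,\mathrm{id}$ to show the mean of $\nabla\hat\omega$ over $U_N$ equals $l\,\textnormal{Id}$, which pins down the rotation. The only (harmless) difference is at the last step, where the paper exploits the same mean-zero fact as an orthogonality relation $\nabla\hat\omega - l\,\textnormal{Id}\perp_{L^2(U_N)} A$ for constant matrices $A$ and concludes via Pythagoras, i.e. $\|\nabla\hat\omega - l\,\textnormal{Id}\|_{L^2(U_N)}\leq\|\nabla\hat\omega - R(\omega)\|_{L^2(U_N)}$ directly, whereas your Cauchy--Schwarz plus triangle-inequality argument gives the same bound with an extra factor, and your explicit handling of the two triangle orientations via right multiplication by $Q$ is a point the paper passes over silently.
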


Note that the right side in equation (\ref{sideeq}) is strictly positive because of the boundary conditions (\ref{periodic}) and because $l>1$, whereas the left is zero for $\omega=\omega_l\in\Omega^{per}_{N,l}$. Since the measure $P_{N,l}$ is supported on the set $\Omega_{N, l}$, (\ref{lemmaeq2}) follows from (\ref{sideeq}). Also note that $c$ does not depend on $N$.

\begin{proof}

Let $\omega\in\Omega_{N, l}$. By Lemma \ref{triangle} we conclude that on every triangle $\triangle\in\cT_N$, we have 

\begin{equation*}
\textrm{dist}\left(\nabla\hat\omega(\triangle) ,\ \textrm{SO}(2)\right)^2 \leq C \max_{x\not=y\in {S(\triangle)}} (|\omega(x)-\omega(y)|-1)^2\leq\frac{C}{2} \sum_{x\not=y\in {S(\triangle)}} (|\omega(x)-\omega(y)|-1)^2
\end{equation*}

where we used the assumption $\epsilon\leq 1$ together with $(\Omega 1)$ and $(\Omega 3)$ to apply Lemma \ref{triangle}. The factor $1/2$ is a consequence of summing over all non-equal pairs $(x,y)$. Orthogonality of the functions which are non-zero on different triangles gives

\[
||\ \textrm{dist}(\nabla\hat\omega, \textrm{SO}(2))\ ||^2_{L^2(U_N)}\leq c_1 \sum_{(x,y)\in E_N}(|\omega(x)-\omega(y)|-1)^2
\]

with $c_{1}=C\ \lambda(\triangle_{0, 1})=C \sqrt{3}/4$ because we sum again over both pairs $(x,y)$ and $(y,x)$ on the right side. With application of Theorem \ref{fries} about geometric rigidity, we find an $R(\omega)\in\textrm{SO}(2)$ such that

\[
||\ \nabla\hat\omega-R(\omega)\ ||^2_{L^2(U_N)}\leq c_{2} \ ||\ \textrm{dist}( \nabla\hat\omega, \textrm{SO}(2))\ ||^2_{L^2(U_N)},
\]

with a constant $c_{2}$, which does not depend on $N$ by Remark \ref{remark}. Due to the periodic boundary conditions (\ref{periodic}), the function $\hat\omega-l\ \textrm{Id}$ is $N$-periodic, this is to say

\begin{equation}
\hat\omega(x+Ny)-l (x+Ny)=\hat\omega(x)-lx \quad \textrm{for all } x\in \RRR^2 \textrm{ and } \ y \in I.
\label{Nperiodic}
\end{equation}

Let $A\in\RRR^{2\times 2}$ be a constant matrix. Integrating the function $\langle\nabla\hat\omega-l \ \textrm{Id}, A  \rangle$ over the set $U_N$, the result equals zero since, by (\ref{Nperiodic}) and the fundamental theorem of calculus,

\[
\int_0^1\langle\nabla\hat\omega-l\ \textrm{Id}, A  \rangle(x+tN)\textnormal{d}t=0 \quad \textrm{for all } x\in \RRR^2
\]

where we used the embedding $\RRR\subset\RRR^2$. Consequently, we obtain the orthogonality property: $\nabla\hat\omega-l\ \textrm{Id}\perp_{L^2(U_N)} A$, for any constant matrix $A\in\RRR^{2\times 2}$ and thus

\[
||\ \nabla\hat\omega-l\ \textrm{Id}\ ||^2_{L^2(U_N)}+||\ l\ \textrm{Id}-R(\omega)\ ||^2_{L^2(U_N)}=||\ \nabla\hat\omega-R(\omega)\ ||^2_{L^2(U_N)}
\]

by Pythagoras. Since $||\ l\ \textrm{Id}-R(\omega)\ ||^2_{L^2(U_N)}\geq0$ and because $P_{N,l}$ is supported on the set $\Omega_{N,l}$, the lemma is established with $c=c_{1}c_{2}$.
\end{proof}

With Lemma \ref{side} we can now prove Theorem \ref{thm1}.

\begin{proof}[Proof of Theorem \ref{thm1}]

Heron's formula states that the area $\lambda(\triangle)$ of the triangle $\triangle$ with side lengths $a_1, a_2, a_3$ is given by

\begin{equation}
\lambda(\triangle)=\frac{1}{4}\sqrt{(a_1+a_2+a_3)(-a_1+a_2+a_3)(a_1-a_2+a_3)(a_1+a_2-a_3)}.
\label{heron}
\end{equation}

By first order Taylor approximation of (\ref{heron}) at the point $a_i=1$, $i\in\{1,2,3\}$ we obtain

\begin{equation*}
\lambda(\triangle)-\lambda({\triangle_{0,1}})=\frac{1}{2\sqrt{3}}\sum_{i=1}^3(a_i-1)+o\left(\sum_{i=1}^3 |a_i-1|\right)\quad \textrm{as}\ (a_1, a_2, a_3)\to(1,1,1).
\end{equation*}

Since the function $\lambda$ is smooth in a neighborhood of $(1,1,1)$, we could also express the remainder term as Big $\cO$ of the sum of the squares. In the following we only need the weaker estimate on the remainder. We choose $\epsilon$ so small that the inequality

\begin{equation}
\frac{1}{4\sqrt{3}}\sum_{i=1}^3(a_i-1) \leq \lambda(\triangle)-\lambda({\triangle_{0,1}})
\label{area1}
\end{equation}

is satisfied whenever $1<a_i<1+\epsilon$. Note that we have divided the constant by two preceding the sum. Let us fix such an $\epsilon$ and assume that $\Omega^{per}_{N,l}$ is defined by means of this $\epsilon.$ Using (\ref{area1}) we can also estimate the squared side length deviations:

\begin{equation}
\sum_{i=1}^3(a_i-1)^2 \leq 4\sqrt{3}\ \epsilon\ (\lambda(\triangle)-\lambda({\triangle_{0,1}})).
\label{area}
\end{equation}

By equation (\ref{sideeq}) from Lemma \ref{side} and ($\ref{area}$), we get an upper bound on $||\nabla\hat\omega-l\ \textnormal{Id}||_{L^2(U_N)}^2$ in terms of the area differences. By summing up the contributions ($\ref{area}$) of the triangles $\triangle\in\cT_N$, we conclude for all $\omega\in\Omega_{N,l}$ that

\begin{equation}
||\ \nabla\hat\omega-l\ \textnormal{Id}\ ||_{L^2(U_N)}^2 \leq 4\sqrt{3} \ \epsilon\ c \sum_{\triangle\in\cT_N}(\lambda(\hat\omega(\triangle))-\lambda({\triangle_{0,1}})).
\label{maxarea}
\end{equation}

As a consequence of $(\Omega 2)$ and the periodic boundary conditions ($\ref{periodic}$), the right hand side in ($\ref{maxarea}$) does not depend on $\omega\in\Omega_{N,l}$. Hence, with $\omega_l\in\Omega_{N,l}$ we can compute

\begin{equation}
\sum_{\triangle\in\cT_N}(\lambda(\hat\omega(\triangle))-\lambda({\triangle_{0,1}}))=\sum_{\triangle\in\cT_N}(\lambda(\hat\omega_l(\triangle))-\lambda({\triangle_{0,1}}))=|\cT_N|\ \lambda(\triangle_{0,1})(l^2-1).
\label{maxarea1}
\end{equation}

The combination of the equations (\ref{maxarea}) and (\ref{maxarea1}) gives

\begin{equation}
||\ \nabla\hat\omega-l\ \textnormal{Id}\ ||_{L^2(U_N)}^2 \leq 4\sqrt{3} \ \epsilon\ c\ |\cT_N|\ \lambda(\triangle_{0,1})(l^2-1).
\label{maxarea2}
\end{equation}

The reference measure $\delta_0\otimes\lambda^{I_N\setminus\{0\}}$ and the set of allowed configurations $\Omega_{N,l}$ are invariant under the reflection $\phi : \omega\mapsto (-\omega(-x))_{x\in I}$ and the translations $\psi_b$ for $b\in I$, defined in (\ref{translation}). As a consequence, the measure $P_{N,l}$ is also invariant under these maps, and therefore the matrix valued random variables $\nabla(\hat\omega(\triangle))$ are identically distributed for all $\triangle\in\cT_N$. Thus, for all $\triangle\in\cT_N$, one has

\begin{equation*}
E_{P_{N,l}}[\ ||\ \nabla\hat\omega-l\ \textnormal{Id}\ ||_{L^2(U_N)}^2 \ ] = |\cT_N| \ \lambda(\triangle_{0,1}) E_{P_{N,l}}[\ |\nabla\hat\omega(\triangle)-l\ \textnormal{Id}|^2 \ ].
\end{equation*}

This equation, together with (\ref{maxarea2}), implies

\begin{equation*}
\lim_{l\downarrow1}\sup_{N\in\NNN}\sup_{\triangle\in\cT_N} E_{P_{N,l}}[\ |\nabla\hat\omega(\triangle)-l\ \textnormal{Id}|^2 \ ]=0.
\end{equation*}

By means of the triangle inequality, we see that for all $\triangle\in\cT_N$ and $\omega\in\Omega_{N,l}$

\begin{equation*}
|\nabla\hat\omega(\triangle)- \textnormal{Id}|^2\leq |\nabla\hat\omega(\triangle)-l\ \textnormal{Id}|^2+c_3^2(l-1)^2+2 c_3\ |l-1|\ |\nabla\hat\omega(\triangle)-l\ \textnormal{Id}|
\end{equation*}

with $c_3=|\mathrm{Id}|>0$. For $\omega\in\Omega_{N,l}$, the term $|\nabla\hat\omega(\triangle)-l\ \textnormal{Id}|$ is uniformly bounded for $l\in (1, \epsilon)$ and $N\in\NNN$, which proves the theorem.
\end{proof}

\textbf{Acknowledgement}\quad  I would like to thank Prof. Dr. Merkl for his useful comments and suggestions. Without his support, this work would have not been possible.

\footnotesize


\begin{thebibliography}{Bak9999}
			
	\bibitem[FJM02]{Fries} \textsc{G. Friesecke, R. D. James and S. M\"uller} \emph{A Theorem on Geometric Rigidity and the Derivation of Nonlinear Plate Theory from Three-Dimensional Elasticity}, Comm. Pure Appl. Math., 55: pp 1461--1506, (2002).
		
	\bibitem[Geo99]{Geo99} \textsc{H.-O. Georgii} \emph{Translation invariance and continuous symmetries in two-dimensional continuum systems}, In Mathematical results in statistical mechanics (Marseilles, 1998): pp 53--69, Word Sci. Publ., River Edge, NJ (1999).
	
	\bibitem[HMR13]{HMR} \textsc{M. Heydenreich, F. Merkl and S. W.W. Rolles}: \emph{Spontaneous breaking of rotational symmetry in the presence of defects}, arXiv:1308.3959 [math.PR], Submitted on 19 Aug 2013, http://arxiv.org/pdf/1308.3959v1.pdf (2013).
				
	\bibitem[MR09]{MR} \textsc{F. Merkl and S. W.W. Rolles}: \emph{Spontaneous breaking of continuous rotational symmetry in two dimensions}, Electron. J. of Probab. 14, Paper no. 57, pp 1705--1726, (2009).
	
	\bibitem[Mer68]{Mer68} \textsc{N. D. Mermin} \emph{Crystalline Order in Two Dimensions}, Phys. Rev., 176: pp 250--254, (1968).
		
	\bibitem[NH79]{NH79} \textsc{D. R. Nelson and B. I. Halperin} \emph{Dislocation-mediated melting in two dimensions}, Phys. Rev. B 19: pp 2457--2484, (1979).
				
	\bibitem[Ric05]{Ri05} \textsc{T. Richthammer} \emph{Two-dimensional Gibbsian point processes with continuous spin symmetries}, Stochastic Process. Appl., 115: pp 827--848, (2005).
	
	\bibitem[Ric09]{Ri09} \textsc{T. Richthammer} \emph{Translational invariance of two-dimensional Gibbsian systems of particles with internal degrees of freedom}, Stochastic Process. Appl., 119: pp 700--736, (2009).
	
	\bibitem[Th06]{Theil} \textsc{F. Theil}: \emph{A proof of Crystallization in Two Dimensions}, Commun. Math. Phys., 262: pp 209--236 (2006).
				
\end{thebibliography}
\end{document}